\newtheorem{theorem}{Theorem}
\newtheorem{lemma}{Lemma}
\newtheorem{properti}{Property}
\theoremstyle{definition}
\newtheorem{example}{Example}
\begin{document}
\markboth{G. Askalidis et al.}{Explaining Snapshots of Network Diffusions}

\title{Explaining Snapshots of Network Diffusions: \\Structural and Hardness Results}
\author{Georgios Askalidis}
\author{Randall A. Berry}
\author{Vijay G. Subramanian}
\affil{Northwestern University}
\date{\vspace{-5ex}}
\renewcommand\Authands{ and }

\maketitle              

\begin{abstract}
Much research has been done on studying the diffusion of ideas or technologies on social networks including the \textit{Influence Maximization} problem and many of its variations. Here, we investigate a type of inverse problem. Given a snapshot of the diffusion process, we seek to understand if the snapshot is feasible for a given dynamic, i.e., whether there is a limited number of nodes whose initial adoption can result in the snapshot in finite time. While similar questions have been considered for epidemic dynamics, here, we consider this problem for variations of the deterministic Linear Threshold Model, which is more appropriate for modeling strategic agents. Specifically, we consider both sequential and simultaneous dynamics when deactivations are allowed and when they are not. Even though we show hardness results for all variations we consider, we show that the case of sequential dynamics with deactivations allowed is significantly harder than all others. In contrast, sequential dynamics make the problem trivial on cliques even though it's complexity for simultaneous dynamics is unknown. We complement our hardness results with structural insights that can lead to better understanding of diffusions on social networks under various dynamics.
\end{abstract}
\section{Introduction}\label{intro}
\par Diffusion processes have been widely studied both theoretically and empirically. One of
the main theoretical frameworks is based on modeling a diffusion as the result of a network
game, i.e., a model in which rational agents make decisions to maximize a pay-off that depends on the actions of other agents in way that depends in part on an underlying network structure. The network game that we assume in this paper is one where agents are called upon at each discrete time point to make a rational decision based on 
previous decisions of other neighboring agents. We can think of the action to be concerning the adoption or not of a new technology and assume that all agents start the game with the status quo technology (we will also refer to this state as ``deactivated'' through out the paper). Moreover, we assume that each agent has a non-negative integer threshold which represents the number of her neighbors in the network that need to adopt the new technology (or ``activate'' as we refer to that action throughout the paper) in order for her utility to be maximized by her also choosing to adopt. In an influence maximization setting this model translates to the widely used \textit{Linear Threshold Model}.
 \par The problem that we study in this paper is a generalization of the \textsc{Target Set} problem introduced in \cite{chen2009approximability}, since alongside the network graph $G$, integer budget $k$ and thresholds $t_1, t_2,\ldots, t_n,$ we are also given a subset $S\subseteq V(G)$ that we call a \textit{Snapshot}. We seek to find an initial seed set of size at most $k$ that leads, in finite time, to the activation of \textit{exactly} $S$. For example, this could model a scenario where a snapshot is observed and one seeks to determine the set of nodes that could have started the underlying diffusion. We call this problem the \textsc{Snapshot} problem, and we study four variations of it. When $S=V(G)$, then the \textsc{Snapshot} problem becomes the \textsc{Target Set} problem since we are looking to activate the whole graph. 
 
\par 
We consider two order-dynamics for our network game setting. In the \textit{simultaneous} (or \textit{parallel}) best-response process, at each point in time, all agents best respond to the state of the network simultaneously while in the \textit{sequential} best-response process we chose only one agent to best respond at each point in time.

\par In addition to the linear threshold model, other widely used models for diffusions in social networks are the \textit{Independent Cascade (IC)}, \textit{Susceptible-Infected (SI)}, \textit{Susceptible-Infected-Susceptible (SIS)} and \textit{Susceptible-Infected-Recovered (SIR)} models. We refer to \cite{easley2010networks} for more information on these models.
\par In the same spirit as the SI and SIR models, for each of the two order-dynamics (simultaneous and sequential), we consider two variations of our problem: one that forces the agents we choose in the seed set to commit to remain activated forever and one that allows them to deactivate at a later stage if such an action maximizes their utility. Note that this restriction concerns only the nodes in the initial seed set and that all other nodes always best-respond and so are allowed to deactivate at any point in time in both settings. When we force the seed set to commit to remain activated, the set of activated nodes can only grow (weakly) larger at each time step and so we call this case \textit{monotone}. Hence we get four variations of the \textsc{Snapshot} problem: \textsc{Monotone Simultaneous Snapshot}, \textsc{Simultaneous Snapshot}, \textsc{Monotone Sequential Snapshot} and \textsc{Sequential Snapshot}.
\par In this work, we start by exploring the connections between feasible snapshots under various dynamics and then show that when we are looking for a single initial adopter we can restrict our attention to the closed neighborhood of the given snapshot. Moreover, in the same case, when trying to find an ordering that produces a given snapshots we can ignore all nodes that are not in the snapshot.
Finally we provide various hardness results for all four variations of the problem, most notably that \textsc{Sequential Snapshot} is {\sf NP}-hard even for $k=1$. Finally, we take an interest in the special case of cliques, a graph structure not studied as much in related literature, and show that even though \textsc{Sequential Snapshot} problem becomes easy to solve, the situation is much more complicated under simultaneous dynamics. 
\par As noted previously, one branch of related work is on variations of the \textsc{Influence maximization problem} \cite{kempe2003maximizing}: Given a graph $G$, threshold vector $\vec{t}$ and a budget $k$, choose $k$ nodes to activate, in order to maximize the number of infected nodes. Such problems can be motivated by marketing scenarios where one tries to target specific influential persons by, e.g., giving them some kind of an offer or a free product, with the goal of making the product as popular as possible. Strong hardness and inapproximability results have been shown for even the special case when all agents have threshold 2 \cite{chen2009approximability}, \cite{lu2011approximation}. Other related problems have been studied as well. For example, \cite{agarwal2008identifying}, defines a notion of influence for bloggers in the web and studies the problem of identifying the most influential bloggers. Similarly, in \cite{mathioudakis2009efficient} the authors define the notions of ``starters'' and ``followers'' in social media and try to identify agents from each set. In a different spirit, \cite{netrapalli2012learning} and \cite{rodriguez2011uncovering} study the problem of determining the edges of the network given the activation times of the agents. 
\par The other branch of related work seeks to find the source of a diffusion modeled as arising from a probabilistic epidemic process. Shah and Zaman, \cite{shah2011rumors}, use the SI model and propose a measure they call rumor-centrality to find the single source of a rumor spread. Prakash et al. in \cite{prakash2014efficiently} study the same problem as us but under the probabilistic SI model and they provide experimentally tested heuristics. Similar work, under the IC model in the context of finding users suspected of providing misinformation, has been done in \cite{nguyen2012sources}. Lappas et al. in \cite{lappas2010finding} study the problem of finding the initial set that best explains a given snapshot in a network. For each set of nodes, they define a cost function that represents the difference of the expected final set of activated nodes and the actual observed snapshot, and try to minimize that function. Finally, assuming that information propagates in a social network following the IC model, Gundecha et al. in \cite{gundecha2013seeking} study the problem of finding initial sources as well as other recipients of some information given only a small fraction of the recipients of the information. Even though the problem is {\sf NP}-hard, they provide an efficient heuristic algorithm that they test with real social media datasets. The main difference of our work from this second body of work is the use of the deterministic Linear Threshold Model in contrast to the stochastic IC and SI ones. This can result in significantly different dynamics.

\section{Model}\label{model}
\par We call the general problem we study the \textsc{Snapshot} problem. The input is a tuple $(G, S, \vec{t}, k)$ where $G=(V,E)$ is an undirected network graph, $S\subseteq V(G)$ is a set of nodes we call the \textit{snapshot}, $\vec{t}=(t_1,t_2\ldots,t_n)$, for $n=|V(G)|,$ is a vector of non-negative integer \textit{thresholds}, and $k$ is a positive integer that we call the \textit{budget}. The goal is to find a set $S_0\subseteq V(G)$ of size at most\footnote{Note that the existence of a seed set of size $\leq k$ does not necessarily imply the existence of a seed set of size exactly $k$, since here we take care to activate only $S$ and nothing more.} $k$, that we will call the \textit{initial activated set} or \textit{seed set}, whose activation will, in finite time, cause the activation of \textit{exactly} $S$ for some valid sequence of best responses.\footnote{Note in the case of simultaneous dynamics the sequence of best responses is unique, while for sequential dynamics there are multiple possibilities; we only require that $S$ be activated under one such sequence.} If such $S_0$ exists for $S$, we will call $S$ a \textit{valid} snapshot. Depending the order dynamics used we get the \textsc{Simultaneous Snapshot} and \textsc{Sequential Snapshot} problems. In these versions we don't force the agents in the seed set to commit to remain activated forever and hence they can best respond by deactivating at any time point. When we do force the nodes in $S_0$ to remain activated forever we get the \textit{monotone} version of each of the two problems, which we will call \textsc{Monotone Simultaneous Snapshot} and \textsc{Monotone Sequential Snapshot}, respectively. 
\par Of particular interest will be the case where $k=1$ and hence $S_0=\{u_0\}$. We will then just say that $u_0$ is an \textit{initial adopter} for $S$. It's important to clarify a point here. We do not need the snapshot $S$ to be the \textit{final} state of the activation triggered by $S_0$. Any $S_0$ that in finite time $t$ will produce \textit{exactly} $S$ is considered to be an initial seed set for $S$ \textit{even if at time $t+1$ more nodes will be added to or removed from} $S$.
\begin{example} Suppose that our input graph and thresholds are as shown in Figure \ref{sim}, our budget is $k=2$, and we use monotone simultaneous dynamics. It can be seen that snapshot $S_1=\{u_1,u_2,u_3\}$ is feasible since we can activate $\{u_1,u_3\}$ at time 0, which will activate node $u_2$ at time 1. We don't mind that at time $2$ node $u_4$ will be activated as well. In contrast, snapshot $S_2=\{u_1,u_3,u_4\}$ is not feasible for $k=2$.
 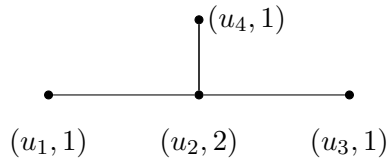
\begin{figure}[h!]
\setlength{\unitlength}{1.2cm}

\begin{center}
\begin{tikzpicture}
\tikzstyle{every node}=[draw,circle,fill=black,minimum size=3pt,
                            inner sep=0.3pt]
                            
\draw (-2,0) node (a) [label=below:${(u_1,1)}$]{};
\draw (0,0) node (b) [label=below:${(u_2,2)}$]{};
\draw (2,0) node (c) [label=below:${(u_3,1)}$]{};
\draw (0,1) node (d) [label=right:${(u_4,1)}$]{};
\path (b) edge (d);
\path (b) edge (c);
\path (b) edge (d);
\path (b) edge (a);

\end{tikzpicture}
\end{center}
\caption{Example of a network and thresholds. The notation $(u_i, t_i)$, that is used throughout the paper, denotes that node $u_i$ has threshold $t_i$.}
\label{sim}
\end{figure}
\end{example}
\par Given that $S=V(G)$, it's implicit from Kempe et al., \cite{kempe2003maximizing}, that finding an influential set of initial adopters of minimum size is {\sf NP}-hard and later Chen, \cite{chen2009approximability}, showed that the problem is hard to approximate within a polylogarithmic factor even when all the thresholds are equal to 2. These will be our starting point towards our hardness results in Section \ref{hardness}.

\section{Structural Results}\label{structural}
In this section we present various results concerning the structure of the snapshots and seed sets under the various dynamics. 
\subsection{Feasible Snapshots of Sequential and Simultaneous Dynamics}
\par We start by understanding the relationship between the feasible snapshots under sequential and simultaneous dynamics. We then show similar results between monotone and non-monotone dynamics.
\begin{lemma}\label{serial} Let $(G,S,\vec{t},k)$ be an instance of the \textsc{Snapshot} problem. If $S$ is feasible for $(G,\vec{t}, k)$ under monotone simultaneous dynamics then it's also feasible for $(G,\vec{t}, k)$ under monotone sequential dynamics.
\end{lemma}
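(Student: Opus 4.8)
The plan is to take any seed set $S_0$ with $|S_0|\le k$ that produces $S$ under monotone simultaneous dynamics and show that the \emph{same} set $S_0$ produces $S$ under monotone sequential dynamics. Let $A_0=S_0\subseteq A_1\subseteq A_2\subseteq\cdots\subseteq A_T=S$ be the sequence of activated sets generated by the simultaneous process, where $T$ is the first time the snapshot $S$ is attained (such a $T$ exists by hypothesis, and the chain is nested because in the monotone setting the activated set only grows weakly). The only fact about a single simultaneous step I will use is the standard characterization: a node $v\notin A_{i-1}$ belongs to $A_i$ if and only if it has at least $t_v$ neighbors in $A_{i-1}$, in which case its unique best response is to activate.

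First I would define the sequential schedule explicitly. Starting from the configuration in which exactly $S_0$ is active, I process the rounds $i=1,2,\ldots,T$ in order; within round $i$ I let the nodes of $A_i\setminus A_{i-1}$ best-respond one at a time, in an arbitrary order, and I never call on any node outside $S$. I would then prove, by induction on $i$, that after round $i$ the active set is exactly $A_i$. The inductive step hinges on the claim that at the moment a node $v\in A_i\setminus A_{i-1}$ is selected, the current active set already contains $A_{i-1}$ (it equals $A_{i-1}$ together with the already-processed part of $A_i\setminus A_{i-1}$, and nothing else). Since $v\in A_i$, it has at least $t_v$ neighbors in $A_{i-1}$, hence at least $t_v$ active neighbors now, so its best response is to activate; the step is valid and grows the active set by exactly $\{v\}$. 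No node is over-activated because we only ever call on nodes of $S$, and none is under-activated because every node we call on does activate.

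Finally I would note that the active set is nondecreasing throughout this schedule, so no already-active node — in particular no (frozen) seed node — ever has an incentive to deactivate, and monotone sequential dynamics is respected. After round $T$ the active set equals $A_T=S$, reached in finite time; since the definition of feasibility only asks that $S$ be attained at \emph{some} finite time under \emph{some} valid best-response sequence, $S_0$ (of size $\le k$) witnesses feasibility under monotone sequential dynamics. The degenerate case $T=0$ (where $S_0=S$) is immediate.

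The step I expect to carry the real weight is the claim that a selected node $v\in A_i\setminus A_{i-1}$ always sees $\ge t_v$ active neighbors regardless of the within-round processing order; this is exactly where monotonicity is essential, since it guarantees $A_{i-1}$ alone already meets $v$'s threshold, so reordering the siblings in round $i$ (or having processed them partially) cannot spoil anything. Everything else is bookkeeping: checking $A_0=S_0$, the budget bound, and that the steps taken after round $T$ are irrelevant.
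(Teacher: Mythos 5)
Your proposal is correct and takes essentially the same route as the paper: keep the same seed set and schedule nodes sequentially in the order of their simultaneous activation rounds, breaking ties within a round arbitrarily and never selecting nodes outside $S$. You simply make explicit the inductive verification (that $A_{i-1}$ already meets each selected node's threshold) that the paper leaves as "it can be seen."
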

\begin{proof}
Suppose that $S$ is feasible under simultaneous dynamics and $S_0$ is a seed set for $S$. We can then create an ordering that produces $S$, also starting from $S_0$, under sequential dynamics: just take the order with which the nodes were activated. Break the ties between nodes that were activated in the same time period arbitrarily. It can be seen that this indeed produces $S$ under sequential dynamics.
\end{proof}

As the following example shows, the reverse is not true. There are snapshots that are feasible under sequential dynamics (monotone or non-monotone) that cannot be created under simultaneous (monotone or non-monotone) dynamics.
\begin{example} \label{simex}
Consider the graph and thresholds shown in Figure \ref{sim} and assume we have $k=1$, i.e., we are allowed only one initial adopter.

                            

It can be seen that the snapshot $S_1=\{u_2, u_3\}$ is feasible under monotone or non-monotone sequential dynamics since we can activate $u_2$ at time 0 and at time 1 choose $u_3$ to best respond but $S_1$ is not feasible under monotone nor non-monotone simultaneous dynamics.
\end{example}


\par Even though the above example shows that the set of feasible snapshots for sequential and simultaneous dynamics are not the same, the next lemma shows that when we are looking to activate the whole graph, i.e., when $S=V(G)$, and there are no deactivations allowed, then the dynamics are indeed equivalent. 

\begin{lemma}\label{serial2} Let $(G, \vec{t}, S=V(G),k)$ be an instance of the \textsc{Snapshot} problem.
Then $S$ is feasible under monotone sequential dynamics if and only if it is feasible under monotone simultaneous dynamics.
\end{lemma}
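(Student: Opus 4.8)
The plan is to get one direction for free from Lemma~\ref{serial} and to prove the other by a direct ``simultaneous dominates sequential'' argument that exploits the fact that in the monotone regime the active set never shrinks. Concretely, Lemma~\ref{serial} already gives: feasible under monotone simultaneous dynamics $\Rightarrow$ feasible under monotone sequential dynamics (and it does not use $S=V(G)$). So the work is in the converse.

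For the converse, suppose $S=V(G)$ is feasible under monotone sequential dynamics via a seed set $S_0$ with $|S_0|\le k$. Feasibility means some valid sequential best-response run starting from $S_0$ reaches active set exactly $V(G)$ in finite time; since $S=V(G)$ there is no risk of overshooting, so I would simply read off the order in which the non-seed nodes get activated, say $v_1,v_2,\dots,v_m$ with $\{v_1,\dots,v_m\}=V(G)\setminus S_0$. By definition of a valid run and by monotonicity (nothing has deactivated), at the moment $v_i$ activates its $\ge t_{v_i}$ active neighbours all lie in $S_0\cup\{v_1,\dots,v_{i-1}\}$.

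Next I would run monotone simultaneous dynamics from the same $S_0$, writing $A_\tau$ for the active set after $\tau$ rounds, so $A_0=S_0$ and $A_\tau\subseteq A_{\tau+1}$. The key step is to prove, by induction on $\tau$, the invariant that the longest prefix $\{v_1,\dots,v_p\}$ of the sequence contained in $A_\tau$ satisfies $p\ge\min(\tau,m)$. The base case $\tau=0$ is vacuous. For the inductive step: if already $p=m$ then $A_\tau=V(G)$ and we are done; otherwise $p\ge\tau$, and $v_{p+1}$ has all of its $\ge t_{v_{p+1}}$ relevant neighbours inside $S_0\cup\{v_1,\dots,v_p\}\subseteq A_\tau$, so $v_{p+1}$ best-responds by activating in round $\tau+1$, whence $\{v_1,\dots,v_{p+1}\}\subseteq A_{\tau+1}$ and the prefix length is $\ge p+1\ge\tau+1$. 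Taking $\tau=m$ gives $A_m=V(G)$, so $S=V(G)$ is feasible under monotone simultaneous dynamics with the very same seed set $S_0$, which has size $\le k$; combined with Lemma~\ref{serial}, this yields the equivalence.

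The only real content is that invariant, and the one spot that needs care is the ``persistence of eligibility'': in the monotone setting a node that has already accumulated $t_v$ active neighbours keeps them forever, so simultaneous dynamics can never lag behind sequential dynamics in reaching any given node. I expect this monotonicity bookkeeping (rather than any calculation) to be the main thing to state carefully; once it is in place, the budget constraint is automatic since $S_0$ is reused verbatim.
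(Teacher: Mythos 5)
Your proposal is correct and takes essentially the same route as the paper: one direction is quoted from Lemma~\ref{serial}, and for the converse the paper argues exactly your key fact---that under monotone simultaneous dynamics each node activates no later than its position in the sequential ordering, so after round $i$ the first $i$ nodes of the ordering are active---which you merely formalize as an explicit prefix-length induction. No gap; your write-up is just a more detailed version of the paper's argument.
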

\begin{proof} If $S$ is feasible under simultaneous dynamics, then from Lemma \ref{serial} it is feasible under sequential dynamics as well. 
Suppose now that $S$ is feasible under sequential dynamics and that the ordering $u_1,u_2,\ldots,u_n$ produces $S$. It is enough to notice that the activation time of node $u_i$ starting from the same seed set can only be earlier under simultaneous dynamics (since it gets activated right when it has the appropriate number of activated neighbors). That means that at every time point $i$ either the set of nodes activated at that time, $S_i$, contains $u_i$ or $u_i$ is already activated in a previous time step. Since there is a time point for every $u_i$ we know that in finite time all nodes will be activated.
\end{proof}

The key difference that makes Lemma \ref{serial2} work for $S=V(G)$ but not in general is that we don't have the issue of {\it over-activating}, i.e., activating more nodes than are in the snapshot. As can be seen by chosing $u_2$ as an initial adopter in Example \ref{simex}, this can occur when $S$ is a strict subset of $V(G)$. Under sequential dynamics we can ensure that a node is not activated by simply never selecting it to best respond, while with simultaneous dynamics we do not have that freedom. 

\par No containment relation holds between the sets of feasible snapshots under simultaneous and sequential dynamics when we don't require monotonicity, as shown in the next example.

\begin{example} Assume we have the same graph and thresholds as shown in Figure \ref{sim} and $k=1$. As shown in Example \ref{simex}, $S_1=\{u_2,u_3\}$ is feasible under sequential dynamics but not under simultaneous.
\par In the same figure, it can be seen that $S_4=\{u_1,u_3, u_4\}$ is feasible under non-monotone simultaneous dynamics since we can choose $u_2$ as our seed. In the next round, $u_1$, $u_3$ and $u_4$ will activate since they had the appropriate number of activated neighbors in the previous round, and $u_2$ will deactivate since it had 0 activated neighbors in the previous round. In contrast, it can be seen that there is no seed set of size $1$ that can produce $S_4$ under sequential dynamics (monotone or non-monotone).
\end{example} 
\par As Lemma \ref{feasible_sim_2} shows, if $S$ is feasible under monotone sequential dynamics then it's also feasible under non-monotone sequential dynamics. The reverse is not true though, as shown in Example \ref{example_seq_mono_}. No containment relation holds between the feasible snapshots of monotone and non-monotone simultaneous dynamics, as shown in Examples \ref{example_sim_non_sim} and \ref{example_non_sim_sim}.

\begin{lemma}\label{feasible_sim_2}
Let $(G,S,\vec{t},k)$ be an instance of the \textsc{Snapshot} problem. Then if $S$ is feasible under monotone sequential dynamics, it is also feasible under sequential dynamics.
\end{lemma}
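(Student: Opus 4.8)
The plan is to take a monotone sequential witness for $S$ and replay it move-for-move under non-monotone sequential dynamics, observing that the only way the two dynamics differ --- seed nodes being allowed to deactivate --- is never actually triggered if we are careful about which node we pick at each step.

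First I would record the monotone run. Suppose $S_0$ with $|S_0|\le k$ is a seed set that produces $S$ under monotone sequential dynamics, and let $v_1,v_2,\ldots,v_m$ be the (necessarily non-seed, pairwise distinct) nodes listed in the order in which they activate along that run, so that $S=S_0\cup\{v_1,\ldots,v_m\}$. Because the monotone run commits the seed nodes, its activated set is weakly increasing over time: a seed that is selected stays on by fiat, and an already-activated non-seed node, when selected, has at least as many activated neighbors as it did at the step when it last activated (the activated set having only grown in the meantime, by induction from time $0$), so it best-responds by staying on --- this is the same reasoning used in the proof of Lemma \ref{serial2}. Consequently, immediately before $v_{j+1}$ activates in the monotone run, the activated set is exactly $S_0\cup\{v_1,\ldots,v_j\}$, and hence $v_{j+1}$ has at least $t_{v_{j+1}}$ of its neighbors inside $S_0\cup\{v_1,\ldots,v_j\}$.

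Next I would run non-monotone sequential dynamics from the same seed set $S_0$, selecting $v_1,v_2,\ldots,v_m$ in this order and nothing else. A straightforward induction on $j$ shows that after the $j$-th selection the activated set is $S_0\cup\{v_1,\ldots,v_j\}$: the base case $j=0$ is immediate, and in the inductive step the activated set just before selecting $v_{j+1}$ equals $S_0\cup\{v_1,\ldots,v_j\}$, so $v_{j+1}$ (still deactivated, since it is a distinct non-seed node) sees exactly the same activated neighbors as it did in the monotone run --- at least $t_{v_{j+1}}$ of them --- and therefore best-responds by activating. After the $m$-th step the activated set is $S_0\cup\{v_1,\ldots,v_m\}=S$, so $S$ is feasible under sequential dynamics with the same seed set, which has size at most $k$.

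The one point that needs care --- and the only place non-monotonicity could bite --- is that one must replay only the activation events $v_1,\ldots,v_m$, not the full selection sequence of the monotone run: replaying a step in which the monotone run selects an already-activated seed (a no-op there, since the seed is committed) could, under non-monotone dynamics, cause that seed to deactivate if it happens to have fewer than its threshold of activated neighbors. Restricting attention to the activation events sidesteps this, since those steps only ever select currently-deactivated non-seed nodes, whose best response (to activate) is identical whether or not seeds are committed. I expect this bookkeeping to be the only subtlety; the rest is a routine induction.
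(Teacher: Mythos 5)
Your proof is correct and follows essentially the same route as the paper's: replay the monotone run from the same seed set while never re-selecting seed nodes (the paper deletes their re-occurrences from the ordering, you keep only the activation events), so the seeds can never deactivate and every other selected node best-responds exactly as before. Your explicit induction and the closing remark about why seed re-selections must be dropped just spell out the paper's one-line argument in more detail.
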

\begin{proof}Let $S_0$ and $\mathcal{O}$ be the seed set and ordering that produces $S$ under monotone sequential dynamics. Then by taking $S_0$ and removing all occurrences of $S_0$ (except the very first one) from $\mathcal{O}$, we can produce $S$ under non-monotone sequential dynamics as well, since $S_0$ will never be deactivated.
\end{proof}
The reverse is not true, as shown in the following example.
\begin{example}\label{example_seq_mono_}
Suppose we have the network and thresholds as shown in Figure \ref{2_6_3} with $k=2$ and $S=\{u_9, u_{10}, u_{11}\}$. Then $S$ is feasible under non-monotone sequential dynamics using the starting set $\{u_1, u_2\}$, but $S$ is not feasible under monotone sequential dynamics.
\end{example}
\begin{figure}\label{counter_2}
\begin{center}
\begin{tikzpicture}
\tikzstyle{every node}=[draw,circle,fill=black,minimum size=3pt,
                            inner sep=0.3pt]
                            
\draw (-1,1) node (u1) [label=left: ${(u_1,7)}$]{};
\draw (-1,0) node (u2) [label=left: ${(u_2,7)}$]{};                            
\draw (1,3) node (u3) [label=right: ${(u_3,2)}$]{};
\draw (1,2) node (u4) [label=right: ${(u_4,2)}$]{};
\draw (1,1) node (u5) [label=right: ${(u_5,2)}$]{};
\draw (1,0) node (u6) [label=right: ${(u_6,2)}$]{};
\draw (1,-1) node (u7) [label=right: ${(u_7,2)}$]{};
\draw (1,-2) node (u8) [label=right: ${(u_8,2)}$]{};
\draw (2,2.5) node (u9) [label=right: ${(u_9,3)}$]{};
\draw (2,0.5) node (u10) [label=right: ${(u_{10},3)}$]{};
\draw (2,-1.5) node (u11) [label=right: ${(u_{11},3)}$]{};

\path (u1) edge (u3);
\path (u1) edge (u4);
\path (u1) edge (u5);
\path (u1) edge (u6);
\path (u1) edge (u7);
\path (u1) edge (u8);

\path (u2) edge (u3);
\path (u2) edge (u4);
\path (u2) edge (u5);
\path (u2) edge (u6);
\path (u2) edge (u7);
\path (u2) edge (u8);

\path (u3) edge (u9);
\path (u4) edge (u9);
\path (u5) edge (u10);
\path (u6) edge (u10);
\path (u7) edge (u11);
\path (u8) edge (u11);

\end{tikzpicture}
\end{center}
\caption{When $k=2$, the only way to activate $S=\{u_9,u_{10},u_{11}\}$ is by taking the seed set being $S_0=\{u_1,u_2\}$}
\label{2_6_3}
\end{figure}
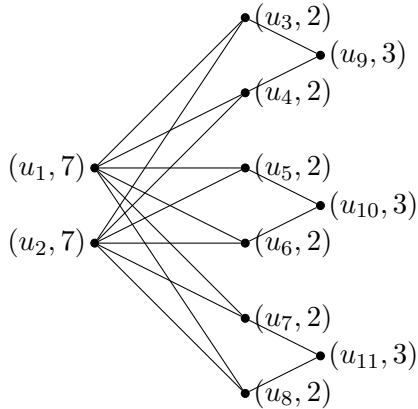

No containment relation holds between the feasible snapshots of monotone and non-monotone simultaneous dynamics, as shown in the next two examples.

\begin{example}\label{example_sim_non_sim}
Consider the graph in Figure \ref{sim} and $k=1$. Then $S=\{u_1,u_3, u_4\}$ is feasible under non-monotone simultaneous dynamics by taking $u_2$ as the initial adopter and letting the process run for one step, but it is not feasible under monotone simultaneous dynamics.
\end{example}

\begin{example}\label{example_non_sim_sim}
Consider the graph in Figure \ref{sim} and $k=1$. Then $S_1=\{u_1, u_2, u_3, u_4\}$ is feasible under monotone simultaneous dynamics with starting set $u_2$, but it is not feasible under non-monotone simultaneous dynamics.
\end{example}

\subsection{Distance Between the Seed Set and the Snapshot}
We study next the distance that a seed set can have from an observed snapshot. Clearly, when we allow for no deactivations the seed set must be part of the observed snapshot and hence the distance is zero. When we have non-monotone sequential dynamics, we show below that for the case of $k=1$, the seed set can have distance at most 1 from the observed snapshot. We then show that this is not true when we have $k\geq 2$ or non-monotone simultaneous dynamics.

\begin{lemma}\label{neighbor}
Let $I=(G, S, \vec{t},1)$ be an instance of the \textsc{Snapshot} problem. If $S$ is valid under sequential dynamics, then there exists an initial adopter $u_0$ for $S$ in $N[S]$.
\end{lemma}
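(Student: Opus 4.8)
The plan is to show that if $S$ is realisable at all (with a single seed, under sequential dynamics), then it is realisable starting from the \emph{first node of $S$ that gets activated} in some witnessing run; since that node lies in $S\subseteq N[S]$, the lemma follows. (If $S=\emptyset$ the statement is vacuous, so assume $S\neq\emptyset$.)

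Concretely, I would fix a valid initial adopter $u_0$ together with a run $\{u_0\}=A_0,A_1,\dots,A_m=S$ realising it, where each $A_i$ arises from $A_{i-1}$ by one agent best‑responding. If $u_0\in N[S]$ we are done, so assume $u_0\notin S$ and $u_0$ has no neighbour in $S$. Since $A_0\cap S=\emptyset$ while $A_m\cap S=S\neq\emptyset$, there is a first index $\tau$ with $A_\tau\cap S\neq\emptyset$; the step at time $\tau$ must be an activation (a deactivation cannot introduce a new active node), of some $w\in S$, so $A_{\tau-1}\subseteq V\setminus S$ and $A_\tau=A_{\tau-1}\cup\{w\}$, with $w$ having at least $t_w$ active neighbours, all of them in $V\setminus S$.

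Next I would try to manufacture a run from the singleton seed $\{w\}$ that reaches $S$, in two phases: (1) from $\{w\}$, replay the best responses of steps $1,\dots,\tau-1$ so as to re‑activate the set $A_{\tau-1}$ while $w$ stays active throughout, reaching $A_{\tau-1}\cup\{w\}=A_\tau$; (2) from $A_\tau$, replay verbatim the best responses of steps $\tau+1,\dots,m$, reaching $S$. In phase~(1) every replayed \emph{activation} still fires, because an extra active neighbour ($w$) can only help a node meet its threshold; the only thing that can go wrong is a replayed \emph{deactivation} of a node adjacent to $w$, which might now see enough active neighbours to stay on, and such ``leftover'' active nodes can then cascade in both phases.

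The crux, and the hard part, is therefore controlling these leftover active nodes: one must argue that they stay confined to $V\setminus S$, that they never block an activation the run relies on (again by the monotonicity ``more active nodes only help activation''), and — most delicately — that they can be cleaned up at the end, so that the run terminates at exactly $S$ rather than a proper superset. I expect this to force choosing the witnessing run carefully rather than arbitrarily: for instance, among all pairs (valid seed not in $N[S]$, realising run) pick one minimising the run length, or the number of deactivations occurring before time $\tau$, and derive a contradiction, so that the build‑up phase can be taken deactivation‑free and no self‑sustaining subset of $V\setminus S$ is ever created. Once a valid run from $\{w\}$ is in hand, $w\in S\subseteq N[S]$ is the desired initial adopter and the lemma follows; the degenerate cases ($t_w=0$, selections that cause no state change) are routine bookkeeping.
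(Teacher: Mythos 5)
Your plan has a genuine gap, and in fact its one concretely stated step is wrong. In phase (1) you replay the pre-$\tau$ best responses from the singleton seed $\{w\}$ and claim every replayed activation still fires ``because an extra active neighbour ($w$) can only help.'' But the new run is not a superset of the old one: it is missing $u_0$, which was active (at least initially) in the witnessing run and may be exactly what the early activations relied on (e.g.\ threshold-$1$ neighbours of $u_0$ that serve as intermediaries toward $S$). Monotonicity of activation only helps if the new active sets dominate the old ones at the corresponding steps, which they do not, and $u_0$ itself may be impossible to re-activate by best response (seeds typically have large thresholds). Beyond that, the part you yourself flag as the crux --- confining the ``leftover'' active nodes to $V\setminus S$, and cleaning them up so the run terminates at \emph{exactly} $S$ rather than a superset --- is only an expectation, not an argument: a node outside $S$ that ends up active can only be removed if its active-neighbour count drops below its threshold, and nothing in the construction forces that. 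Choosing a minimal witnessing run is not shown to give deactivation-free build-up or to prevent self-sustaining sets outside $S$, so the proposal does not close. Note also that you are aiming at a stronger conclusion than needed (a seed inside $S$), in a case which the lemma's truth makes vacuous, so the only way to ``construct'' such a run is essentially to first derive the contradiction you are avoiding.

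The paper's argument goes the other way and never re-seeds. Assuming a valid seed $u_0$ with no neighbour in $S$, it picks any $u\in S$ and follows the activation path backwards: it builds a set $A$ by repeatedly adding, for each node already in $A$, those neighbours whose prior activation it best-responded to; this terminates at time-$1$ activations, which are neighbours of $u_0$. It then shows no node of $A$ can ever be the first in $A$ to deactivate --- each such node has a downstream neighbour in $A\cup S$ that it helped activate, whose activation compensates for the possible loss of $u_0$ --- so $A$ stays active forever, hence $A\subseteq S$, and $S$ contains a neighbour of $u_0$, contradicting the assumption. The key insight you are missing is precisely this: the intermediaries on the activation path cannot be shed and therefore must themselves lie in the snapshot, which is why a seed outside $N[S]$ cannot exist at all.
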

\begin{proof} Since all nodes but $u_0$ are best responding, if $u_0$ never gets deactivated, then the lemma is trivially true since $u_0\in S$. So we assume that $u_0$ is not in $S$. We then have two cases: either some vertex $u\in S$ is connected with $u_0$ or for all vertices $u\in S$ we have $(u,u_0)\not\in E(G)$. In the first case, the lemma is, again, trivially true so we concentrate on the second. Pick any $u\in S$. We will argue that there is a set $A\subseteq V(G)$ with $dist(A,u_0)\leq 1$, that remains activated forever even if $u_0$ gets deactivated and hence $A\subseteq S$. This will give us that $dist(S,u_0)\leq 1$.
\par Let's call $S_i$ the set of nodes that were activated at time $i$. Hence $S_0=\{u_0\}$. We construct $A$ by starting from an arbitrary $u\in S$ and then follow the ``activation path'' of $u$ back to $u_0$: At the $(j+1)$-th iteration, for all $v$ that were added to $A$ in the $j$-th iteration we add to $A$ all nodes in $N(v)$ that were activated by best responding \textit{before} $v$. If no such nodes exists, it means that we have reached $S_1$, the nodes that were activated at time 1 and hence are neighbors of $u_0$. Since at each iteration we add to $A$ nodes with strictly decreasing activation times we are guaranteed to intersect with $S_1$ in finite time and at that point we will have that $dist(A,u_0)=1$. Note that this process could take us outside of $S$ but we will show that it won't. Moreover this is a theoretical construction and hence we don't worry about actually finding these nodes. We will argue that even when $u_0$ gets deactivated, none of the vertices of $A$ can be deactivated and hence $A\subseteq S$, which will  prove that $dist(S,u_0)\leq 1$, contradicting that there is no $u\in S$ adjacent to $u_0$.
\par Assume that there is a vertex in $A$ that gets deactivated and take $u_1$ to be the first such. We have two cases: a) $u_1\in S$ and b) $u_1\notin S$. In the first case, and since we are under the assumption that no node in $S$ shares an edge with $u_0$ we have that in order for $u_1$ to be deactivated, at least one of it's neighbors that was activated before it must be deactivated first. Since we have included all such neighbors in $A$ as well, this contradicts the first deactivator definition of $u_1$. Suppose now that $u_1\notin S$. Let's call $S_i$ the set of nodes in $A\setminus S$ that were activated at time $i$. Notice that, by construction, each node in $S_i$ has at least a neighbor that was activated at time $i+1$ (that neighbor could in $S$). Let $u_1\in S_i$ for some $i$. Notice that when $u_1$ best responds by activating in time $i$, $u_0$ contributes at most 1 to it's activation and none of it's neighbors that were activated in time $i+1$, are yet activated. Once these neighbors do get activated they increase the number of activated neighbors of $u_1$ by at least one. Therefore, even if $u_0$ gets deactivated $u_1$ still has at least $t_{u_1}$ activated neighbors and cannot be the first node to be deactivated.
\end{proof}

\par The main idea behind the proof is that we follow the activation path from a node in $S$ back to $u_0$ and then argue that this path cannot be deactivated because every node $u$ in that path has at least one `down-stream' neighbor $v$ that it is responsible for activating, i.e., $v$ was activated after $u$ and it's activation required $u$ to be active. Therefore, even if $u_0$ deactivates, $v$ will ensure that $u$ still has the appropriate number of activated neighbors. Examples  \ref{seq_nesse} and \ref{k_1_nesse} show that the assumptions of sequential dynamics and $k=1$ respectively, are necessary for Lemma \ref{neighbor} to hold.

\begin{example} \label{seq_nesse}This example shows that if instead of sequential with deactivations dynamics we had simultaneous with deactivations dynamics, then Lemma \ref{neighbor} does not hold for any $k$.
Suppose we have $k=1$ and thresholds and graph as in Figure \ref{diamond_2}. Then if the input snapshot is $S=\{u_4, u_7\}$. Then the only seed set of size $1$ for $S$ is $\{u_1\}$. If we activate $u_1$ at time 0, it will deactivate at time 1 and never get activated again. $u_2, u_3, u_5, u_6$ will get activated at time 1 and deactivate at time 2. $u_4$ and $u_7$ will get activated at time $2$. At that point they will have distance 2 from $u_1$. Notice that by copying the graph of Figure \ref{neighbor} $k$ times we can adjust it to work as a counter example for Lemma \ref{neighbor} even for $k\geq 2$. We can make the distance of the snapshot from the seed set larger too, by adding more intermediate nodes that will be activated and deactivated at the next round. 
\end{example}

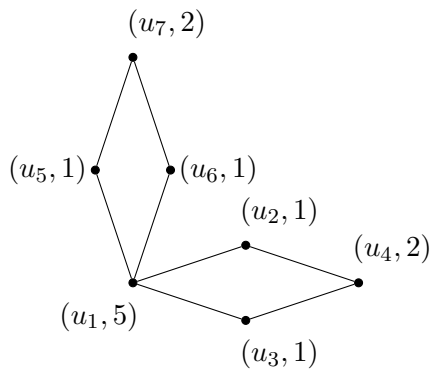
\begin{figure}
\begin{center}
\begin{tikzpicture}
\tikzstyle{every node}=[draw,circle,fill=black,minimum size=3pt,
                            inner sep=0.3pt]
                            
\draw (-2,0.5) node (u1) [label=below left: ${(u_1,5)}$]{};                            
\draw (-0.5,1) node (u2) [label= above right:${(u_2,1)}$]{};
\draw (-0.5,0) node (u3) [label=below right:${(u_3,1)}$]{};
\draw (1,0.5) node (u4) [label=above right: ${(u_4,2)}$]{};
\path (u1) edge (u2);
\path (u1) edge (u3);
\path (u2) edge (u4);
\path (u3) edge (u4);

\draw (-2.5,2) node (u5) [label=left:${(u_5,1)}$]{};
\draw (-1.5,2) node (u6) [label=right:${(u_6,1)}$]{};
\draw (-2,3.5) node (u7) [label=above right:${(u_7,2)}$]{};
\path (u1) edge (u5);
\path (u5) edge (u7);
\path (u7) edge (u6);
\path (u1) edge (u6);

\end{tikzpicture}
\end{center}
\caption{Counter-example for Lemma \ref{neighbor} for simultaneous with deactivation dynamics when $k\geq 1$}
\label{diamond_2}
\end{figure}

\begin{example} \label{k_1_nesse}
Here we give an example where even with serial dynamics Lemma \ref{neighbor} does not extend to the general case of $k\geq 2$. Consider the graph and thresholds as shown in Figure \ref{2_6_3} and suppose that $S=\{u_9, u_{10}, u_{11}\}$ and $k=2$. The only feasible solution is if we take the seed set to be $\{u_1, u_2\}$, which has distance two from the snapshot.
\end{example}

\subsection{The Clearing Lemma}
Lemma \ref{neighbor} states that when we are looking for a single initial adopter, we can just look in the neighborhood of the snapshot $S$. We show next, that we can reduce the search space further by simply ignoring all nodes except $u_0$ that are not in $S$. This shows that in the case of sequential dynamics and $k=1$, the activation cannot 'pass through', i.e., use a node $v$ to activate another node $u\in S$ and then leave $v$ deactivated. Again, the result is trivially true when we have monotone dynamics and hence we concentrate on the non-monotone case.

\begin{lemma}\label{clearlemma}
Let $I=(G,\vec{t},S,1)$ be an instance of \textsc{Sequential Snapshot} and $u_0\in V(G)$. Then there is an ordering of $V(G)$ that produces $S$ with $u_0$ as the initial adopter if and only if there is an ordering of $S\cup u_0$, that produces $S$ with $u_0$ as the initial adopter. Moreover, no node in that ordering other than (possibly) $u_0$ ever gets deactivated.
\end{lemma}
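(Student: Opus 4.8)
\emph{Easy direction.} For the ``$\Leftarrow$'' direction I would simply observe that an ordering of $S\cup\{u_0\}$ producing $S$ with initial adopter $u_0$ is already an ordering of $V(G)$ with the same property: under sequential dynamics one is free never to select a vertex outside $S\cup\{u_0\}$ to best respond, so the set of reachable configurations is unchanged. All the work is thus in ``$\Rightarrow$'', and I would establish the ``moreover'' clause simultaneously, by building the new ordering so that $u_0$ is the only vertex ever deactivated.

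\emph{Setup for ``$\Rightarrow$''.} Fix an ordering $\mathcal O$ of $V(G)$ with initial adopter $u_0$ that reaches configuration $S$ at some step $T$. The plan is to distil from $\mathcal O$ a ``certificate'' for $S$ supported inside $S\cup\{u_0\}$: for each $v\in S$ I would record one activation of $v$ in $\mathcal O$ together with a set $P(v)$ of exactly $t_v$ neighbours that witness it (are active at that step), and set $A$ to be the closure of $S$ under $v\mapsto P(v)$, with $u_0$ adjoined. Taking for each $v$ its \emph{first} activation makes the relation ``$w\in P(v)$'' acyclic --- a witness of $v$'s first activation is active earlier, hence first-activated earlier --- so $A$ admits a linear order $u_0=w_0,w_1,\dots,w_m$ respecting it.

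\emph{Two facts and the reconstruction.} The plan then needs: (i) $A\subseteq S\cup\{u_0\}$, so that $A\setminus\{u_0\}=S$; and (ii) an index $p$ with $u_0\notin P(w_i)$ for all $i>p$ and such that, once $w_1,\dots,w_p$ are active and nothing else but $u_0$ is, $u_0$ has fewer than $t_{u_0}$ active neighbours. Granting (i) and (ii), the ordering I would output is: start with $u_0$; select $w_1,\dots,w_p$ (each activates, its whole witness set $P(w_i)\subseteq\{w_0,\dots,w_{i-1}\}$ being active); if $u_0\notin S$, select $u_0$, which now deactivates; then select $w_{p+1},\dots,w_m$ (each activates, with $P(w_i)\subseteq\{w_1,\dots,w_{i-1}\}$ active). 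The final configuration is $A\setminus\{u_0\}=S$ and $u_0$ is the only deactivated vertex. To prove (i) I would reuse the ``follow the activation path'' argument from the proof of Lemma~\ref{neighbor}: no vertex of $A\setminus\{u_0\}$ is ever deactivated in $\mathcal O$ --- consider the earliest such deactivation, observe the vertex dropped below threshold, and attribute this either to (a) an earlier deactivation of another vertex of $A$, impossible by minimality, or (b) the deactivation of $u_0$, in which case a ``downstream'' neighbour of the vertex inside $A$ (activated after it, hence itself never deactivated) has already joined as a compensating active neighbour --- whence every vertex of $A\setminus\{u_0\}$ is active at step $T$ and so lies in $S$. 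For (ii) I would take $p$ to be the last index with $u_0\in P(w_p)$ (skipping the deactivation, $p=m$, when $u_0\in S$) and check the threshold condition by comparison with the last step $d$ at which $u_0$ deactivates in $\mathcal O$: there $u_0$ has fewer than $t_{u_0}$ active neighbours, while $w_1,\dots,w_p$ are already permanently active, so $u_0$ has fewer than $t_{u_0}$ active neighbours among them.

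\emph{Main obstacle.} The crux is (i). Lemma~\ref{neighbor} establishes the analogous ``no deactivation'' statement only under the extra hypothesis that no vertex of $S$ is adjacent to $u_0$ --- precisely what prevents a vertex of $S$ from being dragged below threshold when $u_0$ deactivates. Without that hypothesis the ``downstream neighbour'' compensation has to be arranged for \emph{every} vertex of $A$, including vertices of $S$ adjacent to $u_0$; where a vertex genuinely lacks downstream support in $A$, the remedy is to ensure it appears among $w_1,\dots,w_p$ --- so that $u_0$ is still active when the new ordering activates it --- which is consistent exactly because $u_0$ witnessed its recorded activation in $\mathcal O$. Making the choice of witnesses $P(\cdot)$, the linear order on $A$, and the cut $p$ mutually compatible is the subtle part. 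A minor loose end is the degenerate case $t_{u_0}=0$ with $u_0\notin S$: then $u_0$ can never deactivate, so no ordering of $V(G)$ produces $S$ either, and the equivalence holds vacuously.
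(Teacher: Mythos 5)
Your ``$\Leftarrow$'' direction is fine, but the ``$\Rightarrow$'' direction has a genuine gap, and it sits exactly where you say it does: claim (i), that the witness closure $A$ stays inside $S\cup\{u_0\}$. The argument you import from Lemma~\ref{neighbor} does not transfer, because that proof runs under the standing assumption that no vertex of $S$ is adjacent to $u_0$; worse, the intermediate statement you rely on --- ``no vertex of $A\setminus\{u_0\}$ is ever deactivated in $\mathcal O$'' --- is false in general. For instance, let $u_0$ have a large threshold and neighbours $v,a$ with $t_v=t_a=1$, let $a$ be adjacent to $b$ with $t_b=1$, and $b$ adjacent to $v$; the ordering $u_0,v,a,u_0,v,b,v$ produces $S=\{v,a,b\}$ with $u_0$ as initial adopter, yet $v\in S$ is deactivated (and later reactivated) along the way. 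The same issue undermines the step in (ii) where you assert $w_1,\dots,w_p$ are ``permanently active'' at $u_0$'s last deactivation step. The ``remedy'' you sketch in the last paragraph (forcing vertices lacking downstream support before the cut $p$, and making $P(\cdot)$, the linear order on $A$ and $p$ mutually compatible) is precisely the substance of the lemma --- that the activation never needs to pass through $V(G)\setminus(S\cup\{u_0\})$ --- and you explicitly leave it unproved. So the proposal correctly identifies the crux but does not close it.

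For comparison, the paper closes it by operating on the given ordering $\mathcal O$ itself rather than rebuilding one from witness sets: list the vertices that ever deactivate in increasing order of first deactivation time; $u_0$ is necessarily first; for the next such vertex $u_1$, the only deactivation preceding $u_1$'s is $u_0$'s, and since that single loss already takes $u_1$ from at least $t_{u_1}$ to at most $t_{u_1}-1$ active neighbours, $u_1$ must have activated with exactly $t_{u_1}$ active neighbours and no neighbour of $u_1$ can have activated while $u_1$ was active. Hence $u_1$'s activation--deactivation pair can be excised from $\mathcal O$ without affecting any other vertex, and iterating this removes every deactivation except $u_0$'s and every vertex outside $S\cup\{u_0\}$, yielding the restricted ordering and the ``moreover'' clause simultaneously. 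If you want to keep your reconstruction-from-witnesses approach, you would need an excision or compensation argument of this kind to establish (i) and the consistency of the cut $p$; as written, the proof is incomplete.
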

\begin{proof}
The ``only if'' direction is trivial: if there is an ordering of $S\cup u_0$ that produces $S$, then there is an ordering of $V(G)$ that produces $S$. \\
Suppose now that $S$ is feasible and $u_0$ is an initial adopter. There exists an ordering that produces $S$, $\mathcal{O}=(u_0, u_{i_1}, \ldots, u_{i_k})$, of $V(G)$. We will argue that we can ignore all nodes in the ordering $\cal{O}$ except nodes in $S$ and $u_0$. We notice that there are three types of vertices in $\cal{O}$: (i) nodes that never activate, (ii) nodes that activated and remained that way and (iii) nodes that activated and then deactivated. We can ignore the type (i) vertices since they never activated so it's like they never were selected to play. The type (ii) vertices are precisely $S$. The interesting case is the case of type (iii) vertices so let $D$ be the set of those nodes. Let's assume for now that every node can appear in the ordering at most twice (so a node can deactivate at most once). This will help us make the point of the proof more clear. We will lift that restriction directly after. We first notice that if $u_0\notin D$ then $D=\emptyset$, since every node, other than $u_0$, activated by best responding. If $D\neq \emptyset$, we sort the nodes in an increasing order of their deactivation time and it's clear that $u_0$ is the first node in that sorted order. Let $u_1$ be the next node in that order and let $i$ and $j$ be it's activation and deactivation time respectively. Since at time $i$, $u_1$ best responded by activating it means that there were at least $t_{u_1}$ activated nodes in $u_1$'s neighborhood. Similarly, since $u_1$ best responded by deactivating at time $j$, it means that there were at most $t_{u_1}-1$ activated nodes in $N(u_1)$. Since $u_0$ was the only node that got deactivated before time $j$ (by the way we chose $u_1$) and that deactivation was enough to bring the number of activated neighbors of $u_1$ from at least $t_{u_1}$ to at most $t_{u_1}-1$, it means that at time $i$ there were \textit{exactly} $t_{u_1}$ active neighbors of $u_1$ and in the time interval $i$ to $j$ no other neighbor of $u_1$ was activated. Hence, we can ignore $u_1$ from the ordering without affecting any other node. Continuing the same argument to the set $D\setminus u_1$ we conclude we can ignore all nodes in $D$ but $u_0$. 
For the general case when nodes can be included in the ordering an arbitrary number of times and hence they can be deactivated multiple times we use a similar argument. Again, we take $D$ to be the set of nodes that were deactivated at least once and sort them in increasing order of their \textit{first} deactivation time. Then $u_0$ is still the first node of that ordering and we take $u_1$ to be the immediately next node. We use the same argument as above to conclude that we can ignore the first two occurrences of $u_1$ in the ordering and repeat until no occurrences of $D\setminus u_0$ remain
\end{proof}

\par As Example \ref{counter_example_cleari_lemma} shows, Lemma \ref{clearlemma} does not extend to the case where $k\geq 2$.

\begin{example} \label{counter_example_cleari_lemma}
Consider again the graph in Figure \ref{counter_2}. Even though nodes like $u_3$ and $u_4$ are not part of the snapshot $S=\{u_9,u_{10}, u_{11}\}$ nor of the seed set $\{u_1, u_2\}$, they are part of the only ordering that produces it. 
\end{example}

\section{Hardness}\label{hardness}
In this section we study the computational complexity of the various versions of the \textsc{Snapshot} problem discussed in this paper.
This work extends the already rich literature on the hardness of the Influence Maximization problem, which was first formulated and proved to be {\sf NP}-hard by Kempe et al. in \cite{kempe2003maximizing}, and \textsc{Target Set} (which is the minimization variant of the Influence Maximization problem) that was proved to be {\sf APX}-hard even in cases of restricted threshold values by Chen in \cite{chen2009approximability}. The decision versions of these two problems coincide and hence we have that \textsc{Target Set} is {\sf NP}-hard. Some tractable cases have also been shown for the \textsc{Target Set} problem. Chen in \cite{chen2009approximability} gives a linear-time algorithm for trees and Ben-Zwi et al. in \cite{ben2009exact} generalized the result by solving the problem in graphs of constant treewidth. Moreover, even though for any constant $k$ the \textsc{Target Set} problem can be solved in $O(n^{k+1})$ time, the problem is {\sf W[2]}-hard for undirected graphs \cite{nichterlein2010tractable} and {\sf W[P]}-hard for directed graphs \cite{eickmeyer2008approximation}. For more on parameterized complexity we refer to \cite{niedermeier2006invitation}.

\par We use the {\sf NP}-hardness of the \textsc{Target Set} problem, \cite{kempe2003maximizing}, \cite{chen2009approximability}, as our starting point for proving the following theorem.

\begin{theorem}\label{theorem1}
\textsc{Simultaneous Snapshot}, \textsc{Monotone Simultaneous Snapshot} and \textsc{Monotone Sequential Snapshot} are all {\sf NP}-hard, even for the case that all thresholds are less than or equal to 2.
\end{theorem}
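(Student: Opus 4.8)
The plan is to reduce from \textsc{Target Set}, exploiting the fact noted in the excerpt that \textsc{Target Set} is exactly the \textsc{Snapshot} problem with $S=V(G)$, and that it is {\sf NP}-hard even with all thresholds at most $2$. The naive idea --- "take the \textsc{Target Set} instance and set $S=V(G)$" --- almost works, but there is a subtlety: in the \textsc{Snapshot} problem we must activate \emph{exactly} $S$, whereas \textsc{Target Set} asks whether the whole graph can be activated by a seed set of size $\le k$; when $S=V(G)$ there is nothing to over-activate, so exactness is automatic. Hence the core claim for \textsc{Monotone Sequential Snapshot} and \textsc{Monotone Simultaneous Snapshot} is essentially immediate once we invoke Lemma~\ref{serial2}, which tells us that for $S=V(G)$ the monotone sequential and monotone simultaneous versions coincide and are each equivalent to \textsc{Target Set}. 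So the first step is to write this identification carefully and conclude {\sf NP}-hardness of both monotone variants, with the threshold bound inherited from Chen's construction.

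The genuinely non-trivial case is \textsc{Simultaneous Snapshot} (non-monotone), because there the seed set is allowed to deactivate, so even with $S=V(G)$ a seed set that works for \textsc{Target Set} might not "freeze" --- seeds could drop out and the final state might oscillate or fail to equal $V(G)$ at any finite time. To handle this I would augment the graph: attach to every original vertex $v$ a small gadget (for instance, a pendant structure, or a pair of new vertices forming a triangle-like attachment) that raises $v$'s effective degree and is arranged so that once $v$ is active it has enough active neighbors to remain active regardless of its seed status, while keeping all thresholds $\le 2$. The snapshot is then set to be the full augmented vertex set $V(G')$. I must check two directions: (i) a \textsc{Target Set} solution of size $k$ for $G$ yields a size-$k$ seed set in $G'$ whose (simultaneous, non-monotone) dynamics activate all of $V(G')$ and stabilize there; (ii) conversely, any size-$k$ seed set for the \textsc{Simultaneous Snapshot} instance, when restricted/projected back to $V(G)$, gives a valid target set --- here one argues that the gadget vertices are "useless" as seeds (they can't help activate original vertices faster than an original seed would) so WLOG seeds lie in $V(G)$, and then the original dynamics are dominated by the augmented ones.

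The main obstacle I anticipate is designing the gadget so that it simultaneously (a) keeps every threshold $\le 2$, (b) is itself fully activated and stabilized by the spillover from its anchor vertex $v$ (so it doesn't require extra budget), (c) does not let $v$ get activated "for free" by the gadget alone (otherwise the reduction breaks, since then even the empty seed set would activate everything), and (d) is robust to the adversarial/simultaneous timing of non-monotone best responses, including the first step where all seeds may behave unexpectedly. A clean way to get (d) is to make the gadget attached to $v$ include vertices of threshold $1$ whose sole neighbor is $v$ together with a vertex of threshold $2$ adjacent to two such, engineered so that the active gadget vertices reciprocally support $v$; I would verify by a short case analysis on the first two or three time steps that the configuration $V(G')$ is a fixed point of the simultaneous dynamics and is reached from the lifted seed set. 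I would also double-check that the argument in direction (ii) does not secretly need monotonicity --- i.e., that a non-monotone seed set for $G'$ really does certify a target set for $G$, possibly by observing that at the time $V(G')$ is first reached, every original vertex has at least $t_v$ active neighbors among $V(G)\cup(\text{its gadget})$ and the gadget contribution is bounded, so the restriction to $G$ is a valid monotone activation.
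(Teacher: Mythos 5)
Your overall strategy is the paper's: the two monotone cases are handled exactly as in the paper (set $S=V(G)$, identify the problem with \textsc{Target Set}, and use Lemma~\ref{serial2} to transfer hardness to the sequential variant), and for the non-monotone simultaneous case the paper likewise attaches a per-vertex self-sustaining gadget so that deactivations cannot hurt, together with the same ``gadget seeds can be replaced by their anchor in $V(G)$'' argument (Lemma~\ref{hard_lemma_3}). Your choice of taking $S=V(G')$ in the augmented instance is a harmless variant of the paper's reduction and even sidesteps over-activation concerns.

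However, the gadget you sketch has a genuine problem. First, as written it is inconsistent: vertices ``of threshold $1$ whose sole neighbor is $v$'' cannot also be adjacent to your threshold-$2$ vertex. Second, and more importantly, a threshold-$2$ anchor activates one round \emph{after} the pendants, so the mutual support is not in place in time: if the seed $v$ best-responds by deactivating at step $1$ (the typical case), the pendants that activated at step $1$ see neither $v$ nor the anchor active and drop out at step $2$, while the anchor comes up at step $2$ and then loses its support at step $3$; this produces period-$2$ oscillations inside the gadgets, and with different gadgets out of phase the configuration $V(G')$ may never be the exact active set at any finite time, so neither your ``stabilization'' claim nor the weaker requirement of hitting $S$ exactly is guaranteed. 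The paper's gadget avoids this by using only threshold-$1$ dummies: $t_v$ dummies adjacent to $v$ plus one extra dummy adjacent to $v$ and to all $t_v$ of them, so the entire gadget activates in a single round, supports itself immediately and permanently, and from then on $v$ has $t_v+1$ permanently active neighbors, hence $v$ can be inactive at most in one round. If you replace your gadget by one with this ``all thresholds $1$, simultaneous mutual support'' property (and then carry out your directions (i) and (ii), which are otherwise the paper's argument), the proof goes through.
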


\par The result follows from Lemmas \ref{hard_lemma_1}, \ref{hard_lemma_2} and \ref{hard_lemma_3} below that show the individual {\sf NP}-hardness results for each of the three problems. 

\begin{lemma}\label{hard_lemma_1}
The \textsc{Monotone Simultaneous Snapshot} is {\sf NP}-hard.
\end{lemma}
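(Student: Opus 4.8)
The plan is to reduce from \textsc{Target Set}, which by \cite{kempe2003maximizing,chen2009approximability} is \textsf{NP}-hard. Given a \textsc{Target Set} instance $(G,\vec{t},k)$ — asking whether there is a seed set of size at most $k$ whose monotone spreading activates all of $V(G)$ — I would construct a \textsc{Monotone Simultaneous Snapshot} instance $(G',S',\vec{t}\,',k')$ in which the snapshot $S'$ is ``fully activatable'' if and only if the original graph can be entirely activated from a small seed. The naive idea ``take $S'=V(G)$ on the same graph'' almost works by Lemma~\ref{serial2}, but the subtlety the paper keeps flagging is \emph{over-activation}: with $S'=V(G)$ there is nothing extra to spill into, so in fact $(G,\vec{t},k)$ as a \textsc{Snapshot} instance with $S=V(G)$ is exactly \textsc{Target Set}. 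So the cleanest route is: the reduction is essentially the identity map, $G'=G$, $\vec{t}\,'=\vec{t}$, $k'=k$, $S'=V(G)$, and then invoke Lemma~\ref{serial2} only if needed — but since \textsc{Monotone Simultaneous Snapshot} with $S=V(G)$ \emph{is} \textsc{Target Set} by definition, hardness is immediate.

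However, I suspect the authors want a reduction that also controls the threshold values (the theorem statement mentions ``all thresholds $\le 2$''), matching Chen's restricted hardness. So the real plan is: start from Chen's \textsc{APX}-hard instances of \textsc{Target Set} where every threshold is exactly $2$, and carry those directly over as a \textsc{Snapshot} instance with $S=V(G)$. First I would state precisely the decision version of \textsc{Target Set} and recall that it remains \textsf{NP}-hard with all thresholds equal to $2$. Then I would argue the equivalence: a seed set $S_0$ of size $\le k$ activates exactly $V(G)$ under monotone simultaneous dynamics iff it is a target set for $G$ — one direction is trivial (activating everything is ``activating exactly $S$'' when $S=V(G)$), and the other is immediate since monotone dynamics can never overshoot $V(G)$. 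Because the transformation is polynomial (indeed trivial) and preserves yes/no answers, \textsf{NP}-hardness transfers.

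The main obstacle is mostly expository rather than mathematical: one must be careful that the \textsc{Snapshot} problem's ``activate \emph{exactly} $S$'' requirement does not create a mismatch, and the point is precisely that when $S=V(G)$ there is no room for a mismatch. A secondary point to check is the footnote caveat in the model — that a seed set of size $\le k$ need not give one of size exactly $k$ — but for the decision version ``size at most $k$'' this is exactly what \textsc{Target Set} asks, so nothing breaks. I would close by noting that this same identity reduction is what underlies Lemmas~\ref{hard_lemma_2} and~\ref{hard_lemma_3} as well, with the monotone-sequential case additionally justified by Lemma~\ref{serial2}, so Theorem~\ref{theorem1} follows by combining the three.
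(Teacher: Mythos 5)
Your proposal is correct and takes essentially the same route as the paper: the paper's proof of Lemma~\ref{hard_lemma_1} is exactly the observation that with $S=V(G)$ the \textsc{Monotone Simultaneous Snapshot} problem coincides with the {\sf NP}-hard \textsc{Target Set} problem, so the identity reduction suffices. Your additional remarks about starting from Chen's threshold-$2$ instances and about the impossibility of over-activation when $S=V(G)$ are consistent refinements of the same argument, not a different approach.
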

\begin{proof} When $S=V(G)$ the problem corresponds to finding a seed set of size at most $k$ that activates the whole graph. This is exactly the {\sf NP}-hard \textsc{Target Set} problem. The results follows.
\end{proof}

\par In Lemma \ref{serial2} we showed that when the snapshot in the input is the whole vertex set of the graph, then it's feasible under Simultaneous Without Deactivations dynamics if and only if it's feasible under Sequential Without Deactivations dynamics. We use that lemma to extend the {\sf NP}-hardness result of Lemma \ref{hard_lemma_1} to the Sequential Without Deactivations case.

\begin{lemma} \label{hard_lemma_2}
The \textsc{Monotone Sequential Snapshot} is {\sf NP}-hard.
\end{lemma}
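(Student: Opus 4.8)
The plan is to obtain Lemma \ref{hard_lemma_2} as an essentially immediate consequence of Lemma \ref{hard_lemma_1} and Lemma \ref{serial2}, via a trivial reduction from \textsc{Target Set}. Given a \textsc{Target Set} instance $(G,\vec t,k)$ — which asks whether there is a seed set of size at most $k$ whose monotone activation spreads to all of $V(G)$, and which is {\sf NP}-hard by \cite{kempe2003maximizing} and \cite{chen2009approximability} — I would map it to the \textsc{Monotone Sequential Snapshot} instance $(G, S, \vec t, k)$ with $S = V(G)$. This map is the identity on $(G,\vec t,k)$ together with setting $S := V(G)$, so it is computable in linear time and is certainly polynomial.

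For correctness I would argue the chain of equivalences: $(G, V(G), \vec t, k)$ is a yes-instance of \textsc{Monotone Sequential Snapshot} $\iff$ (by Lemma \ref{serial2}, applied with $S = V(G)$) it is a yes-instance of \textsc{Monotone Simultaneous Snapshot} $\iff$ (as observed in the proof of Lemma \ref{hard_lemma_1}) there is a seed set of size at most $k$ activating the whole graph $\iff$ $(G,\vec t,k)$ is a yes-instance of \textsc{Target Set}. Since \textsc{Target Set} is {\sf NP}-hard and the reduction is polynomial-time and sound, \textsc{Monotone Sequential Snapshot} is {\sf NP}-hard.

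There is no real obstacle here; the only point that warrants a sentence of care is the ``at most $k$'' versus ``exactly $k$'' subtlety flagged in the footnote of Section \ref{model}. But with $S = V(G)$ over-activation is impossible, and both the decision version of \textsc{Target Set} and the \textsc{Snapshot} problem ask for a seed set of size \emph{at most} $k$, so the two conventions line up directly and no padding argument is needed. In short, once Lemma \ref{serial2} is in hand, Lemma \ref{hard_lemma_2} is a one-line corollary of it and Lemma \ref{hard_lemma_1}.
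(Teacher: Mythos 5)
Your proposal is correct and matches the paper's own proof: both take the special case $S=V(G)$ (equivalently, \textsc{Target Set}, already known {\sf NP}-hard via Lemma \ref{hard_lemma_1}) and invoke Lemma \ref{serial2} to transfer feasibility between monotone simultaneous and monotone sequential dynamics. Your extra remark about the ``at most $k$'' convention is harmless but not needed.
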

\begin{proof}
We take the special case of  the \textsc{Monotone Simultaneous Snapshot} problem where $S=V(G)$ for which we know it's {\sf NP}-hard from Lemma \ref{hard_lemma_1}. Using Lemma \ref{serial2} we know that snapshot $S=V(G)$ will be feasible for graph $G$ under sequential without deactivations dynamics if and only if it's feasible under simultaneous dynamics. The result follows.
\end{proof}

\begin{lemma}\label{hard_lemma_3}
The \textsc{Simultaneous Snapshot} problem is {\sf NP}-hard.
\end{lemma}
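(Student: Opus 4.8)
The plan is to reduce from \textsc{Target Set}, i.e.\ the decision version of Influence Maximization on $G$ with snapshot $V(G)$: is there a seed set of size $\le k$ whose activation reaches all of $V(G)$ under monotone simultaneous dynamics? This is {\sf NP}-hard, and by Chen \cite{chen2009approximability} remains so with all thresholds $\le 2$. The obstacle, relative to the monotone cases already handled, is that under non-monotone dynamics \emph{every} vertex — in particular every seed vertex — best-responds and may deactivate; as the path $u_1-u_2-u_3$ in Example \ref{simex} illustrates, this produces oscillations, so we cannot just reuse Lemma \ref{serial2}. The fix is to wrap $G$ in a \emph{locking gadget} that pins every vertex on once it has fired.

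Given a \textsc{Target Set} instance $(G,\vec t,k)$ with all thresholds $\le 2$, build $G'$ by adding, for each $v\in V(G)$, two new \emph{support} vertices $s_v,s_v'$, making $\{v,s_v,s_v'\}$ a triangle, and setting $t_{s_v}=t_{s_v'}=1$; original thresholds are unchanged. (For unrestricted thresholds one attaches instead a clique of $\max(2,t_v)$ threshold-$1$ supports to $v$; for thresholds $\le 2$ a triangle suffices and keeps all thresholds of $G'$ at $\le 2$.) Take the snapshot to be $S'=V(G')$ and keep the budget $k$. I claim $(G',V(G'),\vec t',k)$ is a feasible instance of \textsc{Simultaneous Snapshot} if and only if $(G,\vec t,k)$ is feasible for \textsc{Target Set}; this proves the lemma (and incidentally shows \textsc{Simultaneous Snapshot} is hard already when $S$ is the whole vertex set).

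Two properties of the gadget drive the proof. (i) \emph{No premature firing:} if $\{s_v,s_v'\}$ is not seeded, then until the first time $v$ is active the vertices $s_v,s_v'$ are inactive — they can only be switched on by $v$ or by each other — so before that moment $v$'s support neighbours contribute $0$ and $v$ behaves exactly as in $G$. (ii) \emph{Locking:} if $v$ is active at some time $t$, then $s_v,s_v'$ are active at every time $\ge t+1$ (each sees $v$; threshold $1$), the pair is thereafter self-sustaining (each support always sees the other, still active), and hence $v$ is active at every time $\ge t+2$ (it sees two active supports $\ge t_v$); thus $v$ can ``flicker off'' for at most one step after first firing and is then pinned on forever. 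For the forward direction, take a \textsc{Target Set} seed $S_0$ and use it verbatim in $G'$; by induction on the \textsc{Target Set} activation time of $v$, using (ii) to carry ``eventually permanently active'' along the original activation order, every $v\in V(G)$ becomes permanently active, hence so does every support, so the run reaches the stable configuration $V(G')=S'$. For the reverse direction, given a seed $T$ with $|T|\le k$ that reaches $S'$ in $G'$, form $T'\subseteq V(G)$ by replacing each seeded support of $v$ with $v$ itself; then $|T'|\le|T|\le k$. Ordering $V(G)$ by first-activation time in the $G'$-run and invoking (i), one checks that this order is a legal monotone \emph{sequential} activation of $G$ from $T'$: when some $v\notin T'$ first fires in $G'$, its supports are unseeded, hence inactive at that step, so $v$'s $\ge t_v$ active neighbours there are original neighbours that fired strictly earlier. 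By Lemma \ref{serial2} it follows that $(G,\vec t,k)$ is feasible for \textsc{Target Set}.

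I expect the crux to be property (ii) — precisely the step that rules out the oscillation in Example \ref{simex}. It goes through because the support pair has threshold $1$ and is mutually reinforcing, so a single-step flicker of $v$ cannot collapse it, and after that one flicker $v$ is held active permanently by its supports. The only delicate point is keeping the off-by-one timing straight (supports trail $v$ by a step, and $v$ may trail its supports by a step); the two inductions are then routine bookkeeping.
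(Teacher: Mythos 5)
Your proposal is correct and takes essentially the same approach as the paper: attach mutually-sustaining threshold-$1$ dummy/support vertices to each original vertex so that once it fires it can flicker off for at most one step and is then locked on forever, making deactivations harmless, and then reduce from the monotone problem. The only real difference is bookkeeping: the paper attaches $t_v+1$ dummies (a star plus a hub) to each $v$, keeps the snapshot $S$ unchanged, and reduces from \textsc{Monotone Simultaneous Snapshot}, whereas you use a triangle (or clique) of supports, take the snapshot to be all of $V(G')$, and reduce directly from \textsc{Target Set}; your choice of snapshot in fact deals with the ``exactly $S$'' requirement more cleanly, since the gadget vertices are part of the target configuration.
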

\begin{proof}
We reduce from the \textsc{Monotone Simultaneous Snapshot} in such a way that even though deactivations are allowed they don't hurt the activation process. Given an input $(G,S,\vec{t},k)$ for the \textsc{Monotone Simultaneous Snapshot} problem, we will create a new instance $(G',S,\vec{t}',k)$ for the \textsc{Simultaneous Snapshot} such that the latter has a solution if and only if the former one has too.
\par We create $G'$ by creating for every node in $v\in V(G)$, $t_v$ dummy nodes that we connect only with $v$ and we assign them a threshold of one. Additionally, we create another dummy node also with threshold 1 and connect it to $v$ as well as the previous $t_v$ dummy neighbors of $v$. Hence $v$ will have $t_v+1$ dummy neighbors one of whom connected to the other $t_v$ as well. This creates $G'$ and $\vec{t}'$. We leave $S$ and $k$ same.

$(\Leftarrow)$ Suppose that $(G',S,\vec{t}',k)$ has a solution under non-monotone simultaneous dynamics. It can be seen that if there is a node $v\in V(G')\setminus V(G)$ in the seed set for $S$, we can replace it by it's neighbor in $V(G)$ and get the same result. Then that seed set will also be a seed set for $S$ in $(G,\vec{t})$ under monotone dynamics too.

$(\Rightarrow)$ Suppose now that $(G,S,\vec{t},k)$ has a solution under monotone simultaneous dynamics. The new dummy nodes that we added in $G'$ will ensure that once a node $v\in V(G)$ gets activated, in the next round, it will have at least $t_v+1$ activated neighbors, of which $t_v+1$ of them will not be deactivated ever again. Hence even if $v$ got deactivated for one round, it gets reactivated again and stays activated for ever after that. Hence if we have a seed set for $S$ in $(G,\vec{t})$ it will also be a seed set for $(G',\vec{t}')$ as well.
\end{proof}

\par All three reductions in Lemmas \ref{hard_lemma_1}, \ref{hard_lemma_2} and \ref{hard_lemma_3} are \textit{parameter preserving} and as such they carry over the {\sf W[2]}-hardness result shown in \cite{nichterlein2010tractable} for the \textsc{Target Set}. Hence we get the following theorem as well.
\begin{theorem}\label{theorem2}
\textsc{Simultaneous Snapshot}, \textsc{Monotone Simultaneous Snapshot} and \textsc{Monotone Sequential Snapshot} are all {\sf W[2]}-hard when parameterized by the size of the solution, $k$.
\end{theorem}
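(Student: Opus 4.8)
The plan is to observe that Theorem~\ref{theorem2} is not a new reduction at all, but rather a bookkeeping remark about the three reductions already carried out in Lemmas~\ref{hard_lemma_1}, \ref{hard_lemma_2} and \ref{hard_lemma_3}. The key point is that {\sf W[2]}-hardness of \textsc{Target Set} (parameterized by solution size $k$), established in \cite{nichterlein2010tractable}, is inherited by any reduction from \textsc{Target Set} that (i) is polynomial-time and (ii) maps a parameter-$k$ instance to a parameter-$k$ instance — a so-called parameterized (parameter-preserving) reduction. So the proof is simply to re-examine each of the three lemmas and verify these two properties.

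First I would recall the definition of an {\sf fpt}/parameterized reduction and state the standard fact that composing the {\sf W[2]}-hardness of \textsc{Target Set} under parameter $k$ with such a reduction yields {\sf W[2]}-hardness of the target problem under parameter $k$. Then I would walk through the three constructions. For \textsc{Monotone Simultaneous Snapshot} (Lemma~\ref{hard_lemma_1}) the ``reduction'' is the identity map $(G,\vec t,k)\mapsto(G,V(G),\vec t,k)$, which is trivially polynomial and leaves $k$ untouched. For \textsc{Monotone Sequential Snapshot} (Lemma~\ref{hard_lemma_2}) the reduction is again the identity on $(G,V(G),\vec t,k)$, invoking Lemma~\ref{serial2} for correctness; $k$ is unchanged. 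For \textsc{Simultaneous Snapshot} (Lemma~\ref{hard_lemma_3}) the reduction builds $G'$ by adding $t_v+1$ dummy nodes per vertex $v$; this is clearly polynomial-time (the blow-up is $O(\sum_v t_v)\le O(n\cdot\max_v t_v)$, polynomial in the input size since thresholds are given in the input), and crucially the budget is copied verbatim, $k'=k$. Hence all three are parameterized reductions from a {\sf W[2]}-hard problem, giving the claim.

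The main thing to be careful about — really the only subtlety — is confirming that a parameter-preserving polynomial-time reduction is enough: one must check it is an {\sf fpt}-reduction, i.e.\ that the new parameter is bounded by a computable function of the old one (here it equals the old one, so this is immediate) and that the running time is {\sf fpt} (here it is outright polynomial). Both hold trivially, so there is no real obstacle; the ``proof'' is a one-paragraph verification. I would write:

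\begin{proof}
Recall that a \emph{parameterized} (or \emph{{\sf fpt}-}) reduction from a parameterized problem $A$ to a parameterized problem $B$ is a mapping computable in {\sf fpt}-time that sends an instance $(x,k)$ of $A$ to an instance $(x',k')$ of $B$ with $x'\in B \iff x\in A$ and $k'\le g(k)$ for some computable $g$. Such reductions transfer {\sf W[2]}-hardness. By \cite{nichterlein2010tractable}, \textsc{Target Set} parameterized by the solution size $k$ is {\sf W[2]}-hard. It therefore suffices to check that each of the reductions of Lemmas~\ref{hard_lemma_1}, \ref{hard_lemma_2} and \ref{hard_lemma_3} is parameter preserving, i.e.\ runs in polynomial time and produces an instance with the same budget $k$. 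In Lemma~\ref{hard_lemma_1} the reduction is the identity $(G,\vec t,k)\mapsto(G,V(G),\vec t,k)$, and in Lemma~\ref{hard_lemma_2} it is again the identity on the instance $(G,V(G),\vec t,k)$ together with an appeal to Lemma~\ref{serial2}; both clearly preserve $k$ and run in polynomial time. In Lemma~\ref{hard_lemma_3} the graph $G'$ is obtained from $G$ by adding $t_v+1$ dummy vertices per vertex $v$, which takes time polynomial in the input size, and the budget is left unchanged at $k'=k$. Hence in all three cases we have a parameterized reduction from \textsc{Target Set} parameterized by $k$, and \textsc{Simultaneous Snapshot}, \textsc{Monotone Simultaneous Snapshot} and \textsc{Monotone Sequential Snapshot} are all {\sf W[2]}-hard when parameterized by $k$.
\end{proof}
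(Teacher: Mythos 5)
Your proposal is correct and matches the paper's argument exactly: the paper proves Theorem~\ref{theorem2} by simply noting that the three reductions in Lemmas~\ref{hard_lemma_1}, \ref{hard_lemma_2} and \ref{hard_lemma_3} are parameter preserving and hence carry over the {\sf W[2]}-hardness of \textsc{Target Set} from \cite{nichterlein2010tractable}. Your write-up just makes explicit the verification (polynomial time, budget $k$ unchanged) that the paper leaves implicit.
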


\par Finally, all three reductions in Lemmas \ref{hard_lemma_1}, \ref{hard_lemma_2} and \ref{hard_lemma_3} are also \textit{approximation preserving}, \cite{crescenzi1997short}, and as such they carry over the approximation hardness shown by Chen, \cite{chen2009approximability}, for the \textsc{Target Set}.

\begin{theorem}
The optimization version of \textsc{Simultaneous Snapshot}, \textsc{Monotone Simultaneous Snapshot}, \textsc{Sequential Snapshot} and \textsc{Monotone Sequential Snapshot} even for the case when all thresholds are less than or equal to 2, cannot be approximated within the ratio of $O(2^{\log^{1-\epsilon}n})$, for any fixed constant $\epsilon>0$, unless {\sf NP}$\subseteq${\sf DTIME}$(n^{polylog(n)})$.
\end{theorem}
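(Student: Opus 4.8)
The plan is to transfer Chen's hardness of approximation for \textsc{Target Set} \cite{chen2009approximability} to each of the four \textsc{Snapshot} variants by giving, for each, a polynomial-time reduction from \textsc{Target Set} (with all thresholds equal to $2$) that (a) blows up the number of vertices by at most a constant factor, (b) keeps all thresholds $\le 2$, and (c) preserves the optimum value exactly, so that it is approximation-preserving in the sense of \cite{crescenzi1997short}. Given such reductions, a $\rho(N)$-approximation for a \textsc{Snapshot} variant on $N$-vertex instances yields a $\rho(O(n))$-approximation for \textsc{Target Set} on $n$-vertex instances; since $2^{\log^{1-\epsilon}(cn)} = O(2^{\log^{1-\epsilon'}n})$ for every $\epsilon'<\epsilon$, a fixed $\epsilon>0$ for which the variant admits an $O(2^{\log^{1-\epsilon}n})$-approximation would give \textsc{Target Set} an $O(2^{\log^{1-\epsilon/2}n})$-approximation, contradicting Chen's bound unless {\sf NP}$\subseteq${\sf DTIME}$(n^{polylog(n)})$. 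As this holds for every fixed $\epsilon>0$, the theorem follows.

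For \textsc{Monotone Simultaneous Snapshot} the reduction is the identity on \textsc{Target Set} instances: by the proof of Lemma \ref{hard_lemma_1}, $(G,V(G),\vec t,k)$ is feasible iff some $S_0$ of size $\le k$ activates all of $G$, so the two optimization problems literally coincide on these instances. For \textsc{Monotone Sequential Snapshot}, take the same instances $(G,V(G),\vec t,k)$; by Lemma \ref{serial2}, $V(G)$ is feasible under monotone sequential dynamics iff it is feasible under monotone simultaneous dynamics, and this holds budget by budget, so the optimum equals the \textsc{Target Set} optimum of $G$. For \textsc{Sequential Snapshot}, again take $(G,V(G),\vec t,k)$: one direction is Lemma \ref{serial2} followed by Lemma \ref{feasible_sim_2} (a size-$\le k$ monotone-simultaneous solution is a monotone-sequential, hence a non-monotone-sequential, solution for $V(G)$); for the converse, given $S_0$ with $|S_0|\le k$ that produces $V(G)$ under sequential dynamics, order the vertices by the first time they become active (every vertex has a finite such time, since all of $V(G)$ is eventually active), breaking ties by the sequential order. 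Each non-seed vertex $v$, at the moment it first activates, has at least $t_v$ neighbours already active, and those neighbours precede $v$ in this order; hence $S_0$ already activates all of $V(G)$ under the monotone dynamics, i.e.\ $S_0$ is a \textsc{Target Set} solution. Thus on these instances the \textsc{Sequential Snapshot} optimum equals the \textsc{Target Set} optimum. This equivalence is special to $S=V(G)$, since over-activation is the obstruction in general.

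For \textsc{Simultaneous Snapshot} the identity reduction fails, so we use the gadget reduction of Lemma \ref{hard_lemma_3} from \textsc{Monotone Simultaneous Snapshot}, which the previous paragraph already shows inherits Chen's hardness. Given $(G,S,\vec t,k)$ we attach to each vertex $v$ the $t_v+1$ threshold-$1$ dummies as in Lemma \ref{hard_lemma_3}, obtaining $(G',S,\vec t',k)$ with $|V(G')|\le 4|V(G)|$ (since $t_v\le 2$) and all new thresholds equal to $1$. The proof of Lemma \ref{hard_lemma_3} shows both that a size-$m$ monotone-simultaneous solution for $G$ is a size-$m$ simultaneous solution for $G'$ (the dummies re-lock every activated original vertex, so deactivations do no harm) and that a size-$m$ simultaneous solution for $G'$ yields a size-$\le m$ monotone-simultaneous solution for $G$ after replacing each dummy seed by its unique neighbour in $V(G)$. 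Hence the optimum is preserved, and composing with the inclusion of \textsc{Target Set} instances in \textsc{Monotone Simultaneous Snapshot} gives the claim for \textsc{Simultaneous Snapshot}.

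I expect the main obstacle to be verifying that the \textsc{Simultaneous Snapshot} reduction genuinely preserves the optimum: under non-monotone simultaneous dynamics a seed vertex is not forced to stay active, so one must confirm that the dummy gadget makes each original vertex stabilise in the active state (rather than oscillate) quickly enough that the trajectory in $G'$ still reaches \emph{exactly} the intended snapshot, and that replacing a dummy seed by its neighbour creates no spurious activations. A secondary, routine point is the bookkeeping that converts the constant-factor vertex blow-up into the stated $O(2^{\log^{1-\epsilon}n})$ bound via a small adjustment of $\epsilon$; and for \textsc{Sequential Snapshot} the only delicate step is checking that the first-activation-order argument above uses nothing beyond $S=V(G)$.
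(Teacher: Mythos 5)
Your proposal is correct and follows essentially the same route as the paper: the theorem is obtained by observing that the reductions of Lemmas \ref{hard_lemma_1}, \ref{hard_lemma_2} and \ref{hard_lemma_3} preserve the optimum value, keep all thresholds at most $2$, and blow up the instance size by at most a constant factor, so that Chen's $O(2^{\log^{1-\epsilon}n})$-inapproximability for \textsc{Target Set} transfers in the approximation-preserving sense of \cite{crescenzi1997short}. The one place where you go beyond the paper is the fourth variant: the paper's proof invokes only the three reductions above, which cover \textsc{Monotone Simultaneous}, \textsc{Monotone Sequential} and \textsc{Simultaneous Snapshot}, leaving \textsc{Sequential Snapshot} implicit; your argument that on instances with $S=V(G)$ the non-monotone sequential optimum coincides with the \textsc{Target Set} optimum (one direction via Lemmas \ref{serial2} and \ref{feasible_sim_2}, the converse by ordering vertices by first activation time and noting that the at least $t_v$ neighbours active when $v$ first activates precede $v$ in that order and never deactivate in the monotone run) is a correct and useful way to close that gap explicitly. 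Finally, the obstacle you flag for \textsc{Simultaneous Snapshot} --- whether the dummy gadget really lets the trajectory in $G'$ reach \emph{exactly} the prescribed snapshot, and whether swapping a dummy seed for its original neighbour causes no spurious activations --- concerns Lemma \ref{hard_lemma_3} itself, on which the paper's proof relies in exactly the same way, so it does not distinguish your argument from the paper's.
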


\par Nevertheless, for constant $k$, all three problems discussed so far are solvable in time $O(n^{k+1})$ by a brute force search, and hence are polynomial time solvable for $k=1$. We show next that when we have non-monotone sequential dynamics the problem becomes {\sf NP}-hard even for $k=1$.

\begin{theorem}
The \textsc{Sequential Snapshot} problem is {\sf NP}-hard even for $k=1$.
\end{theorem}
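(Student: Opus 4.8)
The plan is to reduce from a known \textsf{NP}-hard problem in which a single ``source'' object must simultaneously satisfy many local constraints. A natural candidate is \textsc{Vertex Cover} or \textsc{3-SAT}, but because we are restricted to $k=1$, the reduction must force a \emph{single} initial adopter $u_0$ to encode a global choice (e.g. a satisfying assignment or a cover), and then use the freedom of non-monotone sequential dynamics—specifically the ability to activate a gadget, read off information, and then let the gadget deactivate—to ``verify'' the encoded choice. So I would reduce from \textsc{3-SAT}: given a formula $\phi$ with variables $x_1,\dots,x_m$ and clauses $C_1,\dots,C_\ell$, build a graph $G_\phi$, thresholds $\vec t$, and a target snapshot $S$ such that $\phi$ is satisfiable iff some single node $u_0$ is an initial adopter for $S$ under \textsc{Sequential Snapshot}.

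The construction I would attempt: put all candidate initial adopters into one ``selector'' gadget so that effectively only one distinguished vertex $u_0$ can work, and attach to $u_0$ a long ``assignment-encoding'' path or tree whose activation order (which sequential dynamics lets us choose) encodes an assignment of truth values. Then, for each clause $C_j$, build a verification gadget that becomes part of $S$ only if at least one literal gadget feeding it has been activated in the correct phase, and—crucially—use deactivation so that the literal gadgets for $x_i$ and $\lnot x_i$ cannot both remain active in $S$ (enforcing consistency of the assignment). The thresholds would be chosen so that (i) each clause-node needs exactly one active literal-neighbor to enter $S$ and stay, (ii) the variable gadgets are forced into a ``true or false but not both'' configuration by making a shared controller node deactivate whichever branch is not chosen, and (iii) $S$ is exactly: $u_0$ (which may itself end up deactivated, allowed by Lemma~\ref{clearlemma}'s caveat), all clause-verification nodes, plus the padding that records a consistent assignment. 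The snapshot $S$ must be precisely this set: over-activation (activating a clause gadget for an unsatisfied clause, or activating both polarity gadgets) is what makes unsatisfiable instances infeasible.

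The key technical steps, in order, would be: (1) design the variable gadget enforcing an exclusive choice under sequential deactivation dynamics, proving a lemma that in any valid ordering producing $S$, for each $i$ exactly one of the two polarity sub-gadgets of $x_i$ ends up in $S$; (2) design the clause gadget so that clause-node $c_j \in S$ forces at least one of its literal-neighbors to have been activated before $c_j$ and to remain active—so $c_j$'s inclusion certifies $C_j$ is satisfied by the chosen assignment; (3) show the ``forward'' direction: a satisfying assignment gives an explicit ordering (activate $u_0$, propagate down the encoding gadget choosing branches per the assignment, let the unused branches deactivate, then activate clause nodes) yielding exactly $S$; (4) show the ``backward'' direction: any ordering producing exactly $S$ must, by the gadget lemmas, induce a consistent assignment satisfying every clause; (5) verify polynomial size and that all thresholds are small constants (bounded by the max clause arity plus a constant), and that only one choice of $u_0$ can work.

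The hard part will be step (1)–(2) simultaneously: getting the gadgets to interact so that sequential dynamics has exactly enough freedom to encode an assignment but not enough to ``cheat''—in particular, preventing the solver from activating a clause gadget spuriously (over-activation breaking feasibility is our friend here, but it also means a single stray activation kills an otherwise-good branch, so the gadgets must be robust), and ensuring the forced deactivation in the variable gadget cannot cascade and knock out a clause node that should stay in $S$ (this is where the ``down-stream neighbor keeps me alive'' intuition from Lemma~\ref{neighbor} must be built into the clause gadget by giving each clause-node a private activated pendant). I would also need to double-check that confining the relevant ordering to $S \cup \{u_0\}$ (Lemma~\ref{clearlemma}) is consistent with the construction, or else explicitly include the intermediate propagation nodes in $S$.
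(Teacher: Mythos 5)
There is a genuine gap: what you give is a reduction \emph{plan} from \textsc{3-SAT}, and the part you yourself flag as ``the hard part''---the variable and clause gadgets that make sequential deactivation dynamics enforce an exclusive, consistent assignment without spurious over-activation---is exactly the part that is missing, so nothing is actually proved. Worse, the mechanism you intend to build on is in tension with the structure of the $k=1$ sequential problem. First, the snapshot $S$ is part of the constructed instance, so it is fixed before any assignment is chosen; it therefore cannot contain ``padding that records a consistent assignment,'' and a gadget lemma of the form ``exactly one of the two polarity sub-gadgets of $x_i$ ends up in $S$'' contradicts $S$ being prescribed in the input. Second, Lemma~\ref{clearlemma} (which you cite) says that for $k=1$ any feasible snapshot is producible by an ordering confined to $S\cup\{u_0\}$ in which no node other than possibly $u_0$ ever deactivates; so the core trick of ``activate a branch, read off information, then let the unused branch deactivate'' cannot be what certifies anything---intermediate non-snapshot nodes can simply never be asked to best respond, and snapshot nodes never need to deactivate. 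Any correct reduction must funnel all of the hardness through the single deactivation that is actually forced, namely that of the seed $u_0$ when $u_0\notin S$.

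That is precisely what the paper does, and it is much simpler than a SAT encoding: it reduces from \textsc{Target Set}. Given $(G,\vec t,k)$, each original node $u_i$ gets a private gadget of $t_i$ threshold-$1$ nodes plus a ``trigger'' node $u_i'$ (so activating $u_i'$ suffices to activate $u_i$), and a single hub $u_0$ of threshold $k+1$ is joined to all triggers; the snapshot is $S=V(G')\setminus\{u_0\}$ and the budget is $1$. The forward direction uses $u_0$ to hand out at most $k$ trigger activations and then lets $u_0$ deactivate (it must, since it is excluded from $S$); the backward direction uses the threshold $k+1$ to argue that at the moment $u_0$ deactivates at most $k$ triggers were active, and these yield a target set of size at most $k$ in $G$. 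If you want to salvage your approach, you would either have to redesign the gadgets so that the only deactivation exploited is that of $u_0$ itself, or reduce from a problem whose ``budget'' can be encoded in a single threshold as the paper does.
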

\begin{proof} Because we are in the case where $k=1$, we can assume that a potential seed node $u_0$ is given along the input and the decision problem becomes finding out if $S$ is feasible with $u_0$ as the initial seed node. This version of the problem is polynomial time equivalent to the original version since if we could solve this in polynomial time, then we could solve the original problem by just taking every $u\in N[S]$ as a potential $u_0$. This would increase the complexity of the algorithm only by a linear factor. Given an instance $I=(G, \vec{t}, k)$ of the \textsc{Target Set} problem, we define a new instance $I'=(G', S\subseteq V(G'),\vec{t'}, k')$ of the \textsc{Sequential Snapshot} problem as follows: for every node $u_i\in V(G),$ with threshold $t_{i}$ we define $t_{i}$ new nodes $v_i^1,\ldots, v_i^{t_i}\in V(G')$ and we connect these new nodes only to $u_i$. We then add, for every node $u_i\in V(G)$ a new node $u'_i\in V(G')$ and connect $u'_i$ to $v_i^1,\ldots, v_i^{t_i}$. We set the threshold of all new nodes to be equal to 1. Finally, we add a new node $v_0\in V(G')$ and set its threshold to be equal to $k+1$. We connect $v_0$ to all $u'_i$. We set $S=V(G')\setminus u_0$\\
$(\Rightarrow)$ Suppose that there exists a target set $A=\{u_1,u_2,\ldots,u_k\}$ of size $k$ in $V(G)$. We can then find an ordering in $V(G')$ that activates $S$: $u_0,u'_1,\ldots, u'_k, u_0,\ldots$ First of all, since $u_0$ has threshold $k+1$, it will deactivate when it's asked to best respond (i.e. the second time it appears in the ordering). Then notice that the activation of $u'_i$ is enough to activate $u_i$. This is because once $u'_i$ gets activated it's neighboring $v_i^1, v_i^2,\ldots, v_i^{t_i}$ nodes of threshold 1 are ready to get activated as well. Since they are $i$ of them, once they get activated $u_i$ has enough activated neighbors to get activated as well. So if we activate $u'_1,\ldots, u'_k$ we can obtain an ordering that activates $u_1,\ldots, u_k$ as well. Since these nodes are a starting set of $V(G)$ we can find an ordering that activates all $V(G)$. Hence if we know a Target Set for $V(G)$ we can find an ordering that activates all of $S=V(G')\setminus u_0$ with $u_0$ as an initial adopter.\\
$(\Leftarrow)$ Suppose that there exists an ordering of $V(G')$ that activates $S=V(G')\setminus v_0$. We have two cases here, depending if $v_0$ is the first node in that ordering or not. Suppose it's not $u_0$, then we by the clearing lemma we can ignore $u_0$. Notice that we can assume that the initial node is in $V(G)$ since, if it's not and instead it's some $u_i'$ we can replace it with $u_i$ and get at least as many nodes as before. Hence in this case we have a node $u_i$ that activates the whole $V(G)$, so it's a starting set of size 1. Since $k\geq 1$, there is a starting set of size at most $k$ for $I$.
Suppose now that the initial node is $u_0$. Since $u_0$ is not in the snapshot, it means that at some time point it best-responded by deactivating and since it has a threshold of $k+1$, at most $k$ of it's neighbors, $\{u_1', u_2',\ldots, u_{\ell}'\}$ for some $\ell\leq k$,  were activated at that time point. By taking $A=\{u_1,\ldots, u_{\ell}\}$ we have a starting set for $G$. That's because $u_i'$ can only activate $u_i$, and since the activation of $\{u_1',\ldots, u_{\ell}'\}$ activated $S$, it's easy to see that the activation of $\{u_1,\ldots, u_{\ell}\}$ must activate all of $V(G)$.
\end{proof}

\section{\textsc{Snapshot} problem on Cliques}\label{clique}
\par We briefly discuss here the case of the \textsc{Snapshot} problem restricted on cliques, a special case that unlike others (especially trees) has received little attention in the literature. It's easy to notice that the \textsc{Target Set} problem can be solved efficiently on cliques. When trying to activate the whole clique, and because of the strong symmetry of the graph, the best way we can use our budget of $k$ nodes is on the set of $k$ nodes with the highest threshold. 
\par In the \textsc{Snapshot} problem though, we are interested in activating a specific $S\subseteq V(G)$ and nothing more. When we are under monotone sequential dynamics, the problem is still easy since we get to choose which nodes to best respond at each time period and hence we can just ignore all nodes that are not in the snapshot and then solve the problem on $G[S]$, the subgraph induced by $S$, which is easy to do.
\par However under monotone simultaneous dynamics, we need to be careful to not over-activate, and hence choosing the strongest seed set (i.e., the $k$ nodes of highest thresholds) may not be optimal, as shown next.

\begin{example} Suppose we have a clique of size 10 with thresholds as follows (in increasing order): $t_{u_1}=t_{u_2}=1, t_{u_3}=t_{u_4}=2, t_{u_5}=3, t_{u_6}=4, t_{u_7}=5, t_{u_8}=6, t_{u_9}=7$ and $t_{u_{10}}=8$. Suppose we are given the snapshot $S=\{u_1, u_2,\ldots, u_7\}$ and $k=2$. Then activating the two nodes in $S$ with the highest thresholds, i.e. $u_6$ and $u_7$, will activate four nodes, $u_1, u_2, u_3, u_4$, bringing the total number of activated nodes to 6. After that 
the remainder of $S$ will be activated, but so will $u_8$ causing the snapshot to overshot. If instead we activated $u_5$ and $u_4$, we would be able to activate exactly $S$. The intuition behind this is that we need to keep a balance between the nodes that we need to activate and the nodes that we should not. 
\end{example}

\par We present here some properties that can be used to make the \textsc{Monotone Simultaneous Problem} on cliques simpler. We leave as an open question if these properties can be used to provide provable guarantees on the size of the resulting instance. The proofs are deferred to Appendix \ref{omitted_proofs_clique}

\begin{properti} If there exists a node $u\in S$ such that $t_u\geq |S|$, then $u$ must be in the seed set if $S$ is feasible.\end{properti}
\begin{proof}
It's clear that $u$ could not have been activated by best response and hence it must have been part of the seed set.
\end{proof}

\begin{properti}
If there is a node $u\notin S$ such that $t_u\leq k$, then $S$ is not feasible, unless $|S|=k$. 
\end{properti}
\begin{proof}
If $|S|=k$ the problem is trivial, since we can conclude that $S$ is feasible by having $S$ as the seed set and hence we can focus on the case that $|S|>k$. Then the activation process must evolve for at least a round, and since in that around there will be at $k$ nodes activated any node in $u\in V(G)$ with $t_u\leq k$ must activate as well.
\end{proof}

\begin{properti}
If there are nodes $u\in S$ and $v\notin S$ such that $t_u=t_v$, then $u$ must be part of the seed set if $S$ is feasible.
\end{properti}
\begin{proof}
Because of the strong symmetry of the graph, all nodes with the same threshold that don't belong to the seed set get activated at the same time.
\end{proof}

\begin{properti}
Let $t=min\{t_u | u\notin S\}$. Then we can remove all nodes from $V(G)\setminus S$ that have threshold higher than $t$. 
\end{properti}
\begin{proof}
Between the nodes that are not in the seed set, the ones with lower thresholds get activated weakly earlier from the ones with higher thresholds. Hence it suffices to find a way to activate $S$ that does not activate the nodes with the lowest thresholds outside $S$.
\end{proof}
\begin{properti} Let $t=min\{t_u | u\notin S\}$. If $|S|<t$, then $S$ is feasible if and only if the $k$ highest threshold nodes in $S$ can activate $S$.
\end{properti}
\begin{proof}
Since we are under monotone dynamics, the number of activated nodes gets weakly larger at every time step. Since $|S|<t$, there is no risk of over-activating and hence the problem reduces to the \textsc{Target Set} problem on $G[S]$, the subgraph induced by $S$.
\end{proof}

\section{Conclusions and Open Problems}\label{conclusions}
In this paper we studied the problem of explaining given snapshots of a network diffusion, i.e., finding a small seed set whose activation will cause the activation of the snapshot. Although we presented strong hardness results for all variations we studied, we also presented a variety of structural results that can help better our understanding of this important problem. These structural results could potentially be used as part of heuristics and/or approximation algorithms.
We leave several interesting directions open. One being the complexity of the \textsc{Monotone Snapshot} on cliques. If it's proven to be hard, then the question of polynomial size kernels and approximation algorithms arises, and our results from Section \ref{clique} could potentially be used towards those directions.
Another interesting question is how far can a seed set be from a given snapshot? Can the distance be as large as the diameter of the graph or is it upper bounded by a function of $t_{max}$, the maximum threshold of the graph, and/or the budget $k$.
\subsection*{Acknowledgments}
We thank Nicole Immorlica and Ming-Yang Kao for useful discussions as well as two anonymous referees for their comments.

\bibliography{snapshots_arxiv}{}
\bibliographystyle{plain}
\end{document}